\newtheorem{theorem}{Theorem}[section]
\newtheorem{lemma}[theorem]{Lemma}
\newtheorem{proposition}[theorem]{Proposition}
\theoremstyle{remark}
\newtheorem{remark}[theorem]{Remark}
\theoremstyle{definition}
\newtheorem{definition}[theorem]{Definition}
\theoremstyle{example}
\newtheorem{example}[theorem]{Example}
\theoremstyle{notation}
\newcommand{\bra}[1]{\langle#1|}
\newcommand{\ket}[1]{|#1\rangle}
\begin{document}

\title{Quantum probabilities as Dempster-Shafer probabilities in the lattice of subspaces}            
\author{A. Vourdas}
\affiliation{Department of Computing,\\
University of Bradford, \\
Bradford BD7 1DP, United Kingdom}

\begin{abstract}
The orthocomplemented modular lattice of subspaces ${\cal L}[H(d)]$, of a quantum system with $d$-dimensional Hilbert space $H(d)$, is considered.
A generalized additivity relation which holds for Kolmogorov probabilities,
is violated by quantum probabilities in the full lattice ${\cal L}[H(d)]$ (it is only valid within the Boolean subalgebras of ${\cal L}[H(d)]$).
This suggests the use of more general (than Kolmogorov) probability theories, and here the
Dempster-Shafer probability theory is adopted.
An operator ${\mathfrak D}(H_1, H_2)$, which quantifies deviations from Kolmogorov probability theory is introduced,
and it is shown to be intimately related to the commutator of the projectors ${\mathfrak P}(H_1), {\mathfrak P}(H_2)$,
to the subspaces $H_1,H_2$.
As an application, it is shown that the proof of CHSH inequalities for a system of two spin $1/2$ particles, 
is valid for Kolmogorov probabilities, but it is not valid for Dempster-Shafer probabilities.
The violation of these inequalities in experiments, supports the interpretation of quantum probabilities as Dempster-Shafer probabilities.
\end{abstract}

\maketitle

\section{Introduction}

Probability theory needs for its axioms, the concepts of conjuction, disjunction and negation, and in this sense it is tacitly defined
with respect to a lattice. 
Kolmogorov's probability theory is intimately connected to Boolean algebras.
It is defined on a powerset $2^\Omega$ of some set $\Omega$, and 
subset ($\subseteq$), intersection ($\cap$) union ($\cup$), complement (${\overline A}=\Omega -A$), are the logical connectives 
partial order ($\prec$), conjuction ($\wedge$), disjunction ($\vee$), negation ($\neg$), correspondingly.
A basic relation of Kolmogorov probabilities $q(A)$ is that
\begin{eqnarray}\label{I1}
\delta (A,B)=0;\;\;\;\;\delta(A,B)=q(A\vee B)-q(A)-q(B)+q(A\wedge B);\;\;\;A,B\in 2^\Omega.
\end{eqnarray}
This is a `generalized additivity relation' (the term `generalized' refers to the $q(A\wedge B)$ for non-exclusive events).

Quantum logic \cite{LO1,LO2,LO3,LO4,LO5,LO6} is based on the orthomodular lattice of closed subspaces of a Hilbert space, which
has various Boolean algebras as sublattices. 
We show that Eq.(\ref{I1}) is valid for quantum probabilities only within a Boolean subalgebra, but it
is violated in the full orthomodular lattice.
This leads to more general (than Kolmogorov) probability theories, which violate the additivity relation of 
Eq.(\ref{I1})\cite{F1,F2,F3,F4}. Non-additive probabilities are used in cases where
there is added value in a coalition.
In everyday language this is described with the expression 
`the whole is greater than the sum of its parts'. 
One such theory is the Dempster-Shafer theory, which has been used extensively in Artificial Intelligence, Operations Research, Economics, etc \cite{DS1,DS2,DS3,DS4}.

The original problem that Dempster considered \cite{DS1} was that of a sample space $X$ and a multivalued map 
$\Gamma$ from $X$ to another sample space $\Omega$.
He then carried Kolmogorov probabilities on subsets of $X$, into lower and upper probabilities on subsets of $\Omega$.
For lower probabilities $\delta (A,B)\ge 0$ and for upper probabilities $\delta (A,B)\le 0$. 
The need for lower and upper probabilities is intimately related to the multivaluedness of the map $\Gamma$.
The Dempster multivaluedness is similar to the passage from classical physics to quantum physics,
where classical quantities become operators, and we have the problem of ordering in products of these operators
(see below).

We introduce an operator ${\mathfrak D}(H_1, H_2)$ (where 
$H_1,H_2$ are subspaces of the Hilbert space of the system $H(d)$), 
which is analogous to $\delta (A,B)$. We show that this operator is related to
the commutator of the projectors ${\mathfrak P}(H_1), {\mathfrak P}(H_2)$ into $H_1,H_2$.
(proposition \ref{pro} below).
Also let $\rho $ be a density matrix, $p(H_i|\rho)= {\rm Tr}[\rho {\mathfrak P}(H_i)]$ and
${\mathfrak d}(H_1, H_2|\rho)= {\rm Tr}[\rho {\mathfrak D}(H_1,H_2)]$
(this is analogous to $\delta (A,B)$).
The $p(H_1|\rho)$, $p(H_2|\rho)$ can be viewed as lower or upper Dempster-Shafer probabilities,
according to whether ${\mathfrak d}(H_1, H_2|\rho)\ge 0$ or ${\mathfrak d}(H_1, H_2|\rho)\le 0$, correspondingly.
Therefore the ${\mathfrak D}(H_1, H_2)$ characterizes the nature of a pair of quantum probabilities $p(H_1|\rho)$, $p(H_2|\rho)$.

One application of these ideas is in the general area of Bell inequalities and contextuality\cite{C0,C1},
which has been studied extensively for a long time\cite{C2,C3,C4,C5,C6,C7,C8,C9,C10,C11,C12,C13,C14}.
In order to prove inequalities that involve quantum probabilities, 
it is important to understand the properties of these probabilities.
We show that the CHSH (Clauser, Horne, Shimony and Holt \cite{C2}) inequalities,
which are proved using the properties of Kolmogorov probabilities, 
do not hold for Dempster-Shafer probabilities.
Their violation in experiments supports the interpretation of quantum probabilities as Dempster-Shafer probabilities.

General probabilistic theories have been used in quantum mechanics by various authors\cite{G00,G0,G1,G2,G3,G4,G5}.
Operational approaches and convex geometry methods have been studied in \cite{CO1,CO2,CO3,CO4,CO5}.
Test spaces have been studied in \cite{TE1,TE2}.
Fuzzy phase spaces have been studied in \cite{FF1,FF2,FF3}.
Category theory methods have been studied in \cite{CA1,CA2}.
In this paper we show that the use of Dempster-Shafer probability theory alleviates the difficulties that
appear when we use Kolmogorov's theory for quantum probabilities.
An example of these difficulties is that they lead to various Bell-like inequalities, which are violated by experiment.

In section II we introduce the operator ${\mathfrak D}(H_1, H_2)$ as a quantum mechanical analogue to $\delta (A,B)$,
and we discuss its relationship with the commutator $[{\mathfrak P}(H_1), {\mathfrak P}(H_2)]$. 
In section III we state briefly the properties of the Dempster-Shafer (upper and lower) probabilities,  
and discuss their use as quantum probabilities.
In section IV we study CHSH (Clauser, Horne, Shimony and Holt \cite{C1}) inequalities.
We show that their proof is valid for Kolmogorov probabilities but it is invalid for Dempster-Shafer probabilities.
These inequalities are violated in experiments, and this supports the view that quantum probabilities are Dempster-Shafer probabilities. 

We conclude in section V with a discussion of our results. 

\section{Preliminaries}
We consider a quantum system with position and momenta in ${\mathbb Z}(d)$ (the integers modulo $d$), described by 
a $d$-dimensional Hilbert space $H(d)$.
In this space we consider the orthonormal basis of `position states' $\ket{X;n}$, which we represent with the vectors
$\ket{X;0}=(0,...,0,1)^T$, $\ket{X;1}=(0,...,1,0)^T$, etc.
We also consider the basis of momentum states
\begin{eqnarray}
\ket{P;m}=d^{-1/2}\sum _m\omega (mn)\ket{X;n};\;\;\;\;\omega(\alpha )=\exp \left(\frac{i2\pi \alpha}{d}\right );\;\;\;\;m,n, \alpha\in {\mathbb Z}(d)
\end{eqnarray}
and the displacement operators
\begin{eqnarray}
D(\alpha , \beta)=Z^{\alpha}X^{\beta}\omega (-2^{-1}\alpha \beta);\;\;\;\;
Z=\sum _m\omega (m)\ket{X;m}\bra{X;m};\;\;\;\;
X=\sum _m \ket{X;m+1}\bra{X;m}
\end{eqnarray}
Some aspects of the formalism of finite quantum systems, is slightly different in the cases of odd or even $d$.
For example the factor $2^{-1}$ above, is an element of ${\mathbb Z}(d)$, and it exists for odd $d$. 
Below, in the formulas that use the displacement operators, we assume that the dimension $d$ is an odd integer. 

Acting with $D(\alpha , \beta)$ on a (normalized) fiducial vector 
\begin{eqnarray}
\ket {f}=\sum _mf_m\ket{X;m},
\end{eqnarray}
which should not be a position state, or a momentum state, 
we get the following $d^2$ states which we call coherent states\cite{COH,COH1}
\begin{eqnarray}\label{coh}
\ket{C;\alpha, \beta}=D(\alpha , \beta)\ket{f};\;\;\;\;\alpha , \beta \in {\mathbb Z}(d).
\end{eqnarray}
The $X,P,C$ in the notation are not variables, but they simply indicate position states, momentum states and coherent states.
The coherent states obey the resolution of the identity equation:
\begin{eqnarray}\label{coh}
\frac{1}{d}\sum_{\alpha, \beta}\ket{C;\alpha, \beta}\ket{C;\alpha, \beta}={\bf 1}.
\end{eqnarray}
For later use we give the formula
\begin{eqnarray}\label{la}
\lambda (\alpha, \beta ;\gamma, \delta)=\bra{C;\alpha, \beta}C;\gamma, \delta\rangle=
\omega [2^{-1}(\alpha \beta+\gamma \delta)-\alpha \delta]\sum _{n}f_{n+\delta -\beta}^*f_n\omega [n(\gamma -\alpha )]
\end{eqnarray}

We next consider a spin $1/2$ particle described with the Hilbert space $H(2)$.
$S_x,S_y,S_z$ are spin operators in the directions $x,y,z$, correspondingly.
The vectors $\ket{\frac{1}{2},\frac{1}{2}}=(1,0)^T$ and $\ket{\frac{1}{2},-\frac{1}{2}}=(0,1)^T$ represent spin up and spin down states in the $z$-direction.
Also
\begin{eqnarray}
&&S_x=\frac{1}{2}\left(
\begin{array}{cc}
0&1\\
1&0\\
\end{array}
\right );\;\;\;\;\;
S_y=\frac{1}{2}\left(
\begin{array}{cc}
0&-i\\
i&0\\
\end{array}
\right );\;\;\;\;\;
S_z=\frac{1}{2}\left(
\begin{array}{cc}
1&0\\
0&-1\\
\end{array}
\right )
\end{eqnarray}
We express $S_x$ in terms of projectors as
\begin{eqnarray}\label{3}
&&S_x=\frac{1}{2}\Pi (x,1)-\frac{1}{2}\Pi(x,0);\;\;\;\;
\Pi (x,1)=\frac{1}{2}
\left(
\begin{array}{cc}
1&1\\
1&1\\
\end{array}
\right );\;\;\;\;\;\; 
\Pi (x,0)=\frac{1}{2}\left(
\begin{array}{cc}
1&-1\\
-1&1\\
\end{array}
\right )={\bf 1}_2-\Pi (x,1).
\end{eqnarray}

With the $SU(2)$ rotation
\begin{eqnarray}\label{3a}
U(a,b)=
\left(
\begin{array}{cc}
a&b\\
-b^*&a^*\\
\end{array}
\right );\;\;\;\;\;\; |a|^2+|b|^2=1
\end{eqnarray}
we get 
\begin{eqnarray}\label{3b}
&&S_{a,b}=U(a,b)S_x[U(a,b)]^{\dagger}=\frac{1}{2}\Pi(a,b;1)-\frac{1}{2}\Pi(a,b;0)\nonumber\\
&&\Pi (a,b;1)=\frac{1}{2}
\left(
\begin{array}{cc}
|a+b|^2&a^2-b^2\\
(a^*)^2-(b^*)^2&|a-b|^2\\
\end{array}
\right );\;\;\;\;\;\; 
\Pi (a,b;0)=\frac{1}{2}\left(
\begin{array}{cc}
|a-b|^2&-a^2+b^2\\
-(a^*)^2+(b^*)^2&|a+b|^2\\
\end{array}
\right )
\end{eqnarray}
Clearly $S_{0,0}=S_x$. 

For a system of two spin $1/2$ particles, described with the Hilbert space
$H(2)\otimes H(2)=H(4)$, we use analogous notation.

\section{The orthocomplemented modular lattice ${\cal L}[H(d)]$ and the generalized additivity operator ${\mathfrak D}(H_1, H_2)$}

The closed subspaces of a Hilbert space are themselves Hilbert spaces, and  they form an orthomodular lattice,
which has been studied extensively after the work of Birkhoff and von Neumann on quantum logic\cite{LO1,LO2,LO3,LO4,LO5}.
General references on lattices are \cite {BIR,BIR1,BIR2}, and on orthomodular lattices \cite{O1,O2,O3,O4}.
Let ${\cal L}[H(d)]$ be the
orthomodular lattice of subspaces of the $d$-dimensional Hilbert space $H(d)$.
In this case all subspaces are closed, and 
for this reason we ommit the term 'closed'.
Another consequence of the fact that our Hilbert spaces are finite dimensional, is that the orthomodular lattice ${\cal L}[H(d)]$,
is modular. Orthomodularity is weaker concept than orthocomplemented modularity and the ${\cal L}[H(d)]$ is orthocomplemented modular lattice.

The logical connectives in the lattice ${\cal L}[H(d)]$ are defined as follows.
If $H_1,H_2$ are subspaces of $H(d)$, then
\begin{eqnarray}
H_1\wedge H_2=H_1\cap H_2;\;\;\;\;\;H_1\vee H_2={\rm span}(H_1\cup H_2)
\end{eqnarray} 
We use the notation $\cal O$ and $\cal I$ for the smallest and greatest elements in the lattice. Then
${\cal O}=H(0)$ is the zero-dimensional subspace that contains only the zero vector, and ${\cal I}=H(d)$.
$H_1\prec H_2$ means that $H_1$ is a subspace of $H_2$.
The orthocomplement  
$H_1^{\bot}$ is a subspace of $H(d)$, orthogonal to $H_1$, such that
\begin{eqnarray}\label{690}
H_1\wedge H_1^{\bot}={\cal O};\;\;\;\;\;H_1\vee H_1^{\bot}={\cal I}.
\end{eqnarray} 
We denote as ${\mathfrak P}(H_1)$ the projector to the subspace $H_1$.
In the following lemma we give without proof, some elementary properties which are needed later:
\begin{lemma}
\mbox{}
\begin{itemize}
\item[(1)]
\begin{eqnarray}\label{e50}
&&{\mathfrak P}(H_1\wedge H_2){\mathfrak P}(H_1)={\mathfrak P}(H_1){\mathfrak P}(H_1\wedge H_2)=
{\mathfrak P}(H_1\wedge H_2)\nonumber\\
&&{\mathfrak P}(H_1\vee H_2){\mathfrak P}(H_1)={\mathfrak P}(H_1){\mathfrak P}(H_1\vee H_2)={\mathfrak P}(H_1)
\end{eqnarray} 
\item[(2)]
\begin{eqnarray}\label{e5}
{\mathfrak P}(H_1){\mathfrak P}(H_1^{\bot})=0;\;\;\;\;\;{\mathfrak P}(H_1)+{\mathfrak P}(H_1^{\bot})={\bf 1};\;\;\;\;
{\mathfrak P}(H_1\wedge H_2){\mathfrak P}(H_1^{\bot}\wedge H_3)=0.
\end{eqnarray} 
\item[(3)]
If ${\mathfrak P}(H_1){\mathfrak P}(H_2)=0$ then ${\mathfrak P}(H_2){\mathfrak P}(H_1)=0$, $H_1\wedge H_2={\cal O}$ and
${\mathfrak P}(H_1)+{\mathfrak P}(H_2)={\mathfrak P}(H_1\vee H_2)$. The $H_1\wedge H_2={\cal O}$ does not necessarlily imply the
${\mathfrak P}(H_1){\mathfrak P}(H_2)=0$.
The ${\mathfrak P}(H_1){\mathfrak P}(H_2)=0$ is equivalent  to $H_1\prec H_2^{\bot}$, and it is denoted as $H_1\bot H_2$.
\end{itemize}
\end{lemma}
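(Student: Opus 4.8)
The plan is to derive everything from two elementary facts about orthogonal projectors, and to invoke the lattice operations only through their concrete meaning $H_1\wedge H_2=H_1\cap H_2$ and $H_1\vee H_2={\rm span}(H_1\cup H_2)$. The first fact is the \emph{nesting rule}:
\begin{eqnarray}
{\mathfrak P}(K){\mathfrak P}(L)={\mathfrak P}(L){\mathfrak P}(K)={\mathfrak P}(K);\;\;\;\;\;K\prec L.
\end{eqnarray}
To see this, note that for any vector $x$ the image ${\mathfrak P}(K)x$ lies in $K\prec L$, which is pointwise fixed by ${\mathfrak P}(L)$; hence ${\mathfrak P}(L){\mathfrak P}(K)={\mathfrak P}(K)$, and the other equality follows by taking the adjoint and using that projectors are self-adjoint. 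The second fact is the \emph{orthogonality rule}: if $H_a\perp H_b$ then ${\mathfrak P}(H_a){\mathfrak P}(H_b)=0$, because ${\mathfrak P}(H_b)x\in H_b\prec H_a^{\bot}=\ker{\mathfrak P}(H_a)$.

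Part (1) is then immediate: for the first line I would apply the nesting rule with $K=H_1\wedge H_2=H_1\cap H_2\prec H_1=L$, and for the second line with $K=H_1\prec H_1\vee H_2=L$. Part (2) follows similarly. The first two relations are the defining properties of the orthocomplement: $H_1\perp H_1^{\bot}$ gives ${\mathfrak P}(H_1){\mathfrak P}(H_1^{\bot})=0$ by the orthogonality rule, while the orthogonal decomposition $H(d)=H_1\oplus H_1^{\bot}$ gives ${\mathfrak P}(H_1)+{\mathfrak P}(H_1^{\bot})={\bf 1}$. For the third relation I would use the nesting rule to write ${\mathfrak P}(H_1\wedge H_2)={\mathfrak P}(H_1\wedge H_2){\mathfrak P}(H_1)$ and ${\mathfrak P}(H_1^{\bot}\wedge H_3)={\mathfrak P}(H_1^{\bot}){\mathfrak P}(H_1^{\bot}\wedge H_3)$, so that the product contains the factor ${\mathfrak P}(H_1){\mathfrak P}(H_1^{\bot})=0$ in the middle and therefore vanishes.

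For Part (3) I would treat the claims in turn. The symmetry ${\mathfrak P}(H_2){\mathfrak P}(H_1)=0$ is just the adjoint of ${\mathfrak P}(H_1){\mathfrak P}(H_2)=0$. For $H_1\wedge H_2={\cal O}$, any $v\in H_1\cap H_2$ satisfies $v={\mathfrak P}(H_2)v$, whence ${\mathfrak P}(H_1)v={\mathfrak P}(H_1){\mathfrak P}(H_2)v=0$; but $v\in H_1$ forces $v={\mathfrak P}(H_1)v=0$. For the additivity of the projectors, the vanishing of both products gives $({\mathfrak P}(H_1)+{\mathfrak P}(H_2))^2={\mathfrak P}(H_1)+{\mathfrak P}(H_2)$, so the sum is again a self-adjoint, hence orthogonal, projector; it fixes both $H_1$ and $H_2$ and has range contained in $H_1+H_2=H_1\vee H_2$, so it equals ${\mathfrak P}(H_1\vee H_2)$. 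The equivalence ${\mathfrak P}(H_1){\mathfrak P}(H_2)=0\Leftrightarrow H_1\prec H_2^{\bot}$ I would read off the chain ${\mathfrak P}(H_1){\mathfrak P}(H_2)=0\Leftrightarrow H_2\prec\ker{\mathfrak P}(H_1)=H_1^{\bot}\Leftrightarrow H_1\prec H_2^{\bot}$, the last step using that orthocomplementation reverses inclusion and is involutive.

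The only step that is not mechanical is the negative claim that $H_1\wedge H_2={\cal O}$ need not imply ${\mathfrak P}(H_1){\mathfrak P}(H_2)=0$, which I would settle by an explicit counterexample: in $H(2)$ take $H_1={\rm span}(1,0)^{T}$ and $H_2={\rm span}(1,1)^{T}$, two distinct lines meeting only at the origin, so $H_1\wedge H_2={\cal O}$, while the corresponding rank-one projectors have nonzero product since the lines are not orthogonal. Conceptually this is the heart of the lemma, and I expect it to be the one point worth stressing rather than a technical obstacle: the lattice meet records only the set-theoretic disjointness of subspaces, whereas ${\mathfrak P}(H_1){\mathfrak P}(H_2)=0$ encodes the strictly stronger relation of orthogonality $H_1\perp H_2$. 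This gap between being ``disjoint in the lattice'' and being ``orthogonal'' is precisely what lets the projectors generically fail to commute, and is what the operator ${\mathfrak D}(H_1,H_2)$ introduced below is designed to quantify.
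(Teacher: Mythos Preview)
Your proof is correct and complete. The paper itself states this lemma \emph{without proof} (``In the following lemma we give without proof, some elementary properties which are needed later''), so there is no argument in the paper to compare against; your write-up supplies exactly the standard verification the paper omits, including a concrete counterexample for the one nontrivial negative claim.
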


\begin{definition}
$H_1$ commutes with $H_2$ (the standard notation for this is $H_1{\cal C} H_2$) if
\begin{eqnarray}\label{45}
H_1=(H_1\wedge H_2)\vee (H_1\wedge H_2^{\bot})
\end{eqnarray}
\end{definition}
It is easily seen that $H_1 \prec H_2$ implies that $H_1{\cal C} H_2$. 
Also, since ${\cal L}[H(d)]$ is an orthomodular lattice,  if $H_1{\cal C} H_2$, then $H_2{\cal C} H_1$ and also $H_1{\cal C} H_2^{\bot}$,
$H_1^{\bot}{\cal C} H_2$. Furthermore, if $H_1{\cal C} H_2$ then
\begin{eqnarray}\label{AB1}
(H_1\wedge H_2)\vee (H_1\wedge H_2^{\bot})\vee (H_1^{\bot}\wedge H_2)\vee (H_1^{\bot}\wedge H_2^{\bot})
={\cal I},
\end{eqnarray}

\begin{proposition}\label{pro}
In the lattice ${\cal L}[H(d)]$, let
\begin{eqnarray}\label{32}
{\mathfrak D}(H_1, H_2)={\mathfrak P}(H_1\vee H_2)+{\mathfrak P}(H_1\wedge H_2)-{\mathfrak P}(H_1)-{\mathfrak P}(H_2).
\end{eqnarray} 
Then
\begin{itemize}
\item[(1)]
The fact that ${\cal L}[H(d)]$ is a modular lattice implies that ${\rm Tr}[{\mathfrak D}(H_1, H_2)]=0$.
\item[(2)]
The ${\mathfrak P}(H_1\wedge H_2)$ is in general different from ${\mathfrak P}(H_1){\mathfrak P}(H_2)$ and their difference is given by
\begin{eqnarray}\label{e30}
{\mathfrak P}(H_1\wedge H_2)-{\mathfrak P}(H_1){\mathfrak P}(H_2)={\mathfrak P}(H_1){\mathfrak D}(H_1, H_2)=
{\mathfrak D}(H_1, H_2){\mathfrak P}(H_2)
\end{eqnarray}
\item[(3)]
The commutator $[{\mathfrak P}(H_1),{\mathfrak P}(H_2)]$ is related to ${\mathfrak D}(H_1, H_2)$, through the relation:
\begin{eqnarray}\label{e3}
[{\mathfrak P}(H_1),{\mathfrak P}(H_2)]={\mathfrak D}(H_1, H_2)[{\mathfrak P}(H_1)-{\mathfrak P}(H_2)]
=-[{\mathfrak P}(H_1)-{\mathfrak P}(H_2)]{\mathfrak D}(H_1, H_2).
\end{eqnarray}
\item[(4)]
The following are equivalent:
\begin{itemize}
\item[(a)]
${\mathfrak D}(H_1, H_2)=0$.
\item[(b)]
$[{\mathfrak P}(H_1),{\mathfrak P}(H_2)]=0$.
\item[(c)]
${\mathfrak P}(H_1\wedge H_2)={\mathfrak P}(H_1){\mathfrak P}(H_2)$
\item[(d)]
$H_1{\cal C} H_2$
\end{itemize}
\item[(5)]
\begin{eqnarray}
[{\mathfrak P}(H_1),{\mathfrak P}(H_2)]=-[{\mathfrak P}(H_1),{\mathfrak D}(H_1, H_2)].
\end{eqnarray}
\item[(6)]
\begin{eqnarray}\label{32a}
{\mathfrak D}(H_1^{\bot}, H_2^{\bot})=-{\mathfrak D}(H_1, H_2).
\end{eqnarray} 
\item[(7)]
If $H_1\bot H_2$ or $H_1\prec H_2$, then ${\mathfrak D}(H_1, H_2)=0$. 
\end{itemize}
\end{proposition}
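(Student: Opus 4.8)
The plan is to treat the proposition as one structural statement, part (4), surrounded by a ring of direct operator-algebra identities that all feed off the lemma. I would prove part (2) first and then recycle it, since it is the computational engine for (3) and (5), whereas (1), (6) and (7) are essentially independent one-line reductions.

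For part (1) I would use ${\rm Tr}[{\mathfrak P}(H)]=\dim H$ to get ${\rm Tr}[{\mathfrak D}(H_1,H_2)]=\dim(H_1\vee H_2)+\dim(H_1\wedge H_2)-\dim H_1-\dim H_2$, which vanishes exactly because dimension is a modular valuation on ${\cal L}[H(d)]$, i.e. $\dim(H_1\vee H_2)+\dim(H_1\wedge H_2)=\dim H_1+\dim H_2$; this is the one place modularity is used. For part (2) I would left-multiply ${\mathfrak D}(H_1,H_2)$ by ${\mathfrak P}(H_1)$ and apply the first part of the lemma termwise, with ${\mathfrak P}(H_1){\mathfrak P}(H_1\vee H_2)={\mathfrak P}(H_1)$, ${\mathfrak P}(H_1){\mathfrak P}(H_1\wedge H_2)={\mathfrak P}(H_1\wedge H_2)$ and ${\mathfrak P}(H_1)^2={\mathfrak P}(H_1)$, so that the surviving terms collapse to ${\mathfrak P}(H_1\wedge H_2)-{\mathfrak P}(H_1){\mathfrak P}(H_2)$; right-multiplying by ${\mathfrak P}(H_2)$ gives the second equality by the mirror computation.

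Parts (3), (5), (6) and (7) then follow quickly. Since ${\mathfrak D}(H_1,H_2)={\mathfrak D}(H_2,H_1)$, applying part (2) and its $H_1\leftrightarrow H_2$ swap to both the ${\mathfrak P}{\mathfrak D}$ and the ${\mathfrak D}{\mathfrak P}$ forms, and subtracting ${\mathfrak P}(H_2){\mathfrak P}(H_1)$ from ${\mathfrak P}(H_1){\mathfrak P}(H_2)$, yields the two expressions for $[{\mathfrak P}(H_1),{\mathfrak P}(H_2)]$ in (3). For (5) I would expand $[{\mathfrak P}(H_1),{\mathfrak D}(H_1,H_2)]$ termwise: ${\mathfrak P}(H_1)$ commutes with ${\mathfrak P}(H_1\vee H_2)$, with ${\mathfrak P}(H_1\wedge H_2)$ and with itself by the lemma, so only the $-{\mathfrak P}(H_2)$ term contributes and $[{\mathfrak P}(H_1),{\mathfrak D}]=-[{\mathfrak P}(H_1),{\mathfrak P}(H_2)]$. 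Part (6) is De Morgan plus the second part of the lemma: with $H_1^{\bot}\vee H_2^{\bot}=(H_1\wedge H_2)^{\bot}$, $H_1^{\bot}\wedge H_2^{\bot}=(H_1\vee H_2)^{\bot}$ and ${\mathfrak P}(K^{\bot})={\bf 1}-{\mathfrak P}(K)$, the four identity operators cancel and every sign reverses. Part (7) I would do by substitution: for $H_1\prec H_2$ use $H_1\wedge H_2=H_1$, $H_1\vee H_2=H_2$, and for $H_1\bot H_2$ use the third part of the lemma to get $H_1\wedge H_2={\cal O}$ and ${\mathfrak P}(H_1\vee H_2)={\mathfrak P}(H_1)+{\mathfrak P}(H_2)$; in either case ${\mathfrak D}$ telescopes to zero.

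The real obstacle is part (4), where the lattice relation $H_1{\cal C} H_2$ must be reconciled with the operator relation $[{\mathfrak P}(H_1),{\mathfrak P}(H_2)]=0$, and I would prove it as a cycle. Here (a)$\Rightarrow$(b) is immediate from (3). For (b)$\Rightarrow$(a) I would invoke the standard simultaneous-diagonalization facts for commuting projectors: ${\mathfrak P}(H_1){\mathfrak P}(H_2)$ is then the projector onto $H_1\wedge H_2$ and ${\mathfrak P}(H_1)+{\mathfrak P}(H_2)-{\mathfrak P}(H_1){\mathfrak P}(H_2)$ the projector onto $H_1\vee H_2$, and substituting both into ${\mathfrak D}$ makes it vanish. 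The equivalence (b)$\Leftrightarrow$(c) is the observation that (c) forces ${\mathfrak P}(H_1){\mathfrak P}(H_2)$ to be Hermitian, hence equal to its adjoint ${\mathfrak P}(H_2){\mathfrak P}(H_1)$, and conversely. The delicate link is (b)$\Leftrightarrow$(d): for (b)$\Rightarrow$(d) I would note that ${\mathfrak P}(H_1){\mathfrak P}(H_2)$ and ${\mathfrak P}(H_1){\mathfrak P}(H_2^{\bot})={\mathfrak P}(H_1)-{\mathfrak P}(H_1){\mathfrak P}(H_2)$ are orthogonal projectors onto $H_1\wedge H_2$ and $H_1\wedge H_2^{\bot}$ whose sum is ${\mathfrak P}(H_1)$, so by the third part of the lemma their ranges join to $H_1$, which is the definition of $H_1{\cal C} H_2$; for (d)$\Rightarrow$(b) I would use that $H_1\wedge H_2$ and $H_1\wedge H_2^{\bot}$ are orthogonal by the second part of the lemma, so ${\mathfrak P}(H_1)={\mathfrak P}(H_1\wedge H_2)+{\mathfrak P}(H_1\wedge H_2^{\bot})$, and each summand commutes with ${\mathfrak P}(H_2)$ by the first part of the lemma (the first reproduces ${\mathfrak P}(H_1\wedge H_2)$ under ${\mathfrak P}(H_2)$, the second is annihilated by it), whence ${\mathfrak P}(H_1)$ commutes with ${\mathfrak P}(H_2)$. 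The only point demanding care is the licence to use these commuting-projector facts, which are exactly what convert the algebraic condition (b) into the lattice condition (d); everything else is bookkeeping with the lemma.
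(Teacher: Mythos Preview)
Your proposal is correct and tracks the paper's proof closely for parts (1)--(3) and (5)--(7): the paper likewise proves (1) by reducing trace to dimension and invoking the modular height identity, obtains (2) by left-multiplying ${\mathfrak D}$ by ${\mathfrak P}(H_1)$ and right-multiplying by ${\mathfrak P}(H_2)$ using Eq.~(\ref{e50}), derives (3) from (2), and dismisses (5), (6), (7) as straightforward from Eqs.~(\ref{e50}) and (\ref{e5}).

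The only substantive difference is the organization of part (4). The paper argues a single loop $a\Rightarrow b\Rightarrow c\Rightarrow d\Rightarrow a$: for $d\Rightarrow a$ it uses the four-piece decomposition of ${\cal I}$ in Eq.~(\ref{AB1}) to write ${\mathfrak P}(H_1\vee H_2)={\bf 1}-{\mathfrak P}(H_1^{\bot}\wedge H_2^{\bot})$ as a sum of three orthogonal projectors and then checks ${\mathfrak D}=0$ directly. You instead hub the equivalences at (b), proving $(b)\Rightarrow(a)$ via the commuting-projector identity ${\mathfrak P}(H_1\vee H_2)={\mathfrak P}(H_1)+{\mathfrak P}(H_2)-{\mathfrak P}(H_1){\mathfrak P}(H_2)$ and $(d)\Rightarrow(b)$ by showing each summand of ${\mathfrak P}(H_1)={\mathfrak P}(H_1\wedge H_2)+{\mathfrak P}(H_1\wedge H_2^{\bot})$ commutes with ${\mathfrak P}(H_2)$. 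Your route is slightly more economical in that it avoids invoking Eq.~(\ref{AB1}); the paper's route has the virtue of making the full orthogonal decomposition of ${\cal I}$ explicit. Both rely on exactly the same underlying facts about commuting projectors, so the difference is organizational rather than conceptual.
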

\begin{proof}
\mbox{}
\begin{itemize}
\item[(1)]
Let ${\rm dim}(H_1)$ be the dimension of $H_1$, and ${\mathfrak h}(H_1)$ the height of the element
$H_1$ in the lattice ${\cal L}[H(d)]$. Then
\begin{eqnarray}
{\mathfrak h}(H_1)={\rm dim}(H_1)={\rm Tr}[{\mathfrak P}(H_1)].
\end{eqnarray}
${\cal L}[H(d)]$ is a modular lattice and therefore the height obeys the realtion (e.g., \cite{BIR}, p.41)
\begin{eqnarray}
{\mathfrak h}(H_1\vee H_2)+{\mathfrak h}(H_1\wedge H_2)-{\mathfrak h}(H_1)-{\mathfrak h}(H_2)=0.
\end{eqnarray}
From this follows that 
\begin{eqnarray}
{\rm Tr}[{\mathfrak D}(H_1, H_2)]={\rm Tr}[{\mathfrak P}(H_1\vee H_2)]+{\rm Tr}[{\mathfrak P}(H_1\wedge H_2)]-{\rm Tr}[{\mathfrak P}(H_1)]-{\rm Tr}[{\mathfrak P}(H_2)]=0.
\end{eqnarray}

\item[(2)]
In order to prove this,
we multiply Eq.(\ref{32}) on the left hand side with ${\mathfrak P}(H_1)$, using Eq.(\ref{e50}).
We also multiply Eq.(\ref{32}) on the right hand side with ${\mathfrak P}(H_2)$, using Eq.(\ref{e50}).
\item[(3)]
This is proved using Eq.(\ref{e30}).
\item[(4)]
We prove the four parts of the `loop' $a\rightarrow b\rightarrow c\rightarrow d\rightarrow a$.

\begin{itemize}

\item[(i)]
If ${\mathfrak D}(H_1, H_2)=0$ then $[{\mathfrak P}(H_1),{\mathfrak P}(H_2)]=0$, according to Eq.(\ref{e3}).

\item[(ii)]
Let $[{\mathfrak P}(H_1),{\mathfrak P}(H_2)]=0$. From this follows that ${\mathfrak P}(H_1){\mathfrak P}(H_2)$ is a projector.
The ${\mathfrak P}(H_1){\mathfrak P}(H_2)$ projects into a space with vectors which belong to both $H_1$ and $H_2$.
Therefore these vectors belong to $H_1\cap H_2$, and this proves that ${\mathfrak P}(H_1){\mathfrak P}(H_2)$ is the same projector as
${\mathfrak P}(H_1\wedge H_2)$.

\item[(iii)]
We assume that ${\mathfrak P}(H_1\wedge H_2)={\mathfrak P}(H_1){\mathfrak P}(H_2)$.
Since the ${\mathfrak P}(H_1\wedge H_2)$ is symmetric (i.e., ${\mathfrak P}(H_1\wedge H_2)={\mathfrak P}(H_2\wedge H_1)$),
it follows that $[{\mathfrak P}(H_1),{\mathfrak P}(H_2)]=0$. From this we prove that
$[{\mathfrak P}(H_1),{\mathfrak P}(H_2^{\bot})]=0$, and 
therefore ${\mathfrak P}(H_1\wedge H_2^{\bot})={\mathfrak P}(H_1){\mathfrak P}(H_2^{\bot})$.

From ${\mathfrak P}(H_1\wedge H_2){\mathfrak P}(H_1\wedge H_2^{\bot})=0$, it follows that
\begin{eqnarray}
{\mathfrak P}[(H_1\wedge H_2)\vee (H_1\wedge H_2^{\bot})]
&=&{\mathfrak P}(H_1\wedge H_2)+{\mathfrak P}(H_1\wedge H_2^{\bot})\nonumber\\&=&{\mathfrak P}(H_1){\mathfrak P}(H_2)+{\mathfrak P}(H_1){\mathfrak P}(H_2^{\bot})=
{\mathfrak P}(H_1)
\end{eqnarray} 
Therefore $H_1=(H_1\wedge H_2)\vee (H_1\wedge H_2^{\bot})$ which proves that $H_1{\cal C} H_2 $.

\item[(iv)]
If $H_1{\cal C} H_2 $, then Eq.(\ref{45}), and the fact that ${\mathfrak P}(H_1\wedge H_2){\mathfrak P}(H_1\wedge H_2^{\bot})=0$, leads to
\begin{eqnarray}\label{71}
{\mathfrak P}(H_1)={\mathfrak P}(H_1\wedge H_2)+{\mathfrak P}(H_1\wedge H_2^{\bot}).
\end{eqnarray} 
In a similar way we get
\begin{eqnarray}\label{71a}
{\mathfrak P}(H_2)={\mathfrak P}(H_1\wedge H_2)+{\mathfrak P}(H_1^{\bot}\wedge H_2).
\end{eqnarray} 

We note that
\begin{eqnarray}\label{333}
{\mathfrak P}(A_i){\mathfrak P}(A_j)=0;\;\;\;
A_1=H_1\wedge H_2;\;\;\;A_2=H_1^{\bot}\wedge H_2;\;\;\;
A_3=H_1\wedge H_2^{\bot};\;\;\;
A_4=H_1^{\bot}\wedge H_2^{\bot},
\end{eqnarray}
and therefore Eq.(\ref{AB1}) gives
\begin{eqnarray}
{\mathfrak P}(H_1\wedge H_2)+{\mathfrak P}(H_1\wedge H_2^{\bot})+{\mathfrak P}(H_1^{\bot}\wedge H_2)+{\mathfrak P}(H_1^{\bot}\wedge H_2^{\bot})
={\bf 1}.
\end{eqnarray} 
From this follows that
\begin{eqnarray}\label{71b}
{\mathfrak P}(H_1\wedge H_2)+{\mathfrak P}(H_1\wedge H_2^{\bot})+{\mathfrak P}(H_1^{\bot}\wedge H_2)={\bf 1}-{\mathfrak P}(H_1^{\bot}\wedge H_2^{\bot})
={\mathfrak P}(H_1\vee H_2).
\end{eqnarray} 
We now add Eqs(\ref{71}), (\ref{71a}), taking into account Eq.(\ref{71b}), and we prove that ${\mathfrak D}(H_1, H_2)=0$.

\end{itemize}

\item[(5)]
The proof of this is straightforward (using Eq.(\ref{e50})).
\item[(6)]
The proof of this is straightforward (using Eq.(\ref{e5})).
\item[(7)]
The proof of this is straightforward.
\end{itemize}
\end{proof}

\begin{remark}
The operator ${\mathfrak D}(H_1,H_2)$ is analogous to $\delta (A,B)$ for Kolmogorov probabilities. Therefore
the ${\mathfrak D}(H_1,H_2)$, which in general is non-zero, quantifies how different quantum probabilities are from Kolmogorov probabilities.
Eq.(\ref{e3}) shows that the ${\mathfrak D}(H_1,H_2)$ 
is also a measure of non-commutativity. 
\end{remark}

Let $\rho$ is a density matrix of a system described by the space $H(d)$.
An important theorem by Gleason \cite{GL} shows that quantum probabilities associated with the projectors ${\mathfrak P}(H_1)$
are given by
$p(H_1|\rho)={\rm Tr}[\rho {\mathfrak P}(H_1)]$ (in spaces with dimension greater than $2$). Then
\begin{eqnarray}
p(H_1^{\bot}|\rho)=1-p(H_1|\rho).
\end{eqnarray}
Let
\begin{eqnarray}\label{327}
{\mathfrak d}(H_1, H_2|\rho)= {\rm Tr}[\rho {\mathfrak D}(H_1,H_2)]=p(H_1\vee H_2|\rho)+p(H_1\wedge H_2|\rho)-p(H_1|\rho)-p(H_2|\rho).
\end{eqnarray}
Then $H_1{\cal C} H_2 $ or equivalently $[{\mathfrak P}(H_1), {\mathfrak P}(H_2)]=0$, if and only if ${\mathfrak d}(H_1, H_2|\rho)=0$, for all density matrices.
For a particular density matrix
it may be that ${\mathfrak d}(H_1, H_2|\rho)=0$, although ${\mathfrak D}(H_1, H_2)$ is non-zero.

Let $\lambda _i$ and $\ket{v_i}$ (with $i=1,...,d$) be the eigenvalues and eigenvectors of ${\mathfrak D}(H_1, H_2)$. 
We order the eigenvalues  as $\lambda_1\le ...\le\lambda _d$. 
Then
\begin{eqnarray}
&&{\rm Tr}[{\mathfrak D}(H_1, H_2)]=\sum _i \lambda _i=0;\;\;\;\;\lambda_1\le 0 \le \lambda_{d}\nonumber\\
&&{\mathfrak d}(H_1, H_2|\rho)=\sum \lambda _i\bra{v_i}\rho \ket{v_i};\;\;\;\;\lambda _1\le {\mathfrak d}(H_1, H_2|\rho) \le \lambda _d.
\end{eqnarray}
For various density matrices, the ${\mathfrak d}(H_1, H_2|\rho)$ takes both positive and negative values.
For example:
\begin{eqnarray}
{\mathfrak d}(H_1, H_2|\ket{v_1}\bra{v_1})=\lambda _1 \le 0;\;\;\;\;
{\mathfrak d}(H_1, H_2|\ket{v_d}\bra{v_d})=\lambda _d \ge 0;\;\;\;\;
{\mathfrak d}(H_1, H_2|\frac{1}{d}{\bf 1})=0.
\end{eqnarray}

\begin{example}
In the space $H(3)$ we consider the one-dimensional subspaces
\begin{eqnarray}
H_1=\left \{a (1,0,0)^T\right \};\;\;\;\;H_2=\left \{a (1,1,0)^T\right \}
\end{eqnarray}
Here we give the general  vector that belongs to each of these spaces.
In this case
\begin{eqnarray}
H_1\vee H_2=\left \{(a,b,0)^T\right \};\;\;\;\;H_1\wedge H_2={\cal O}
\end{eqnarray}
The corresponding projectors are
\begin{eqnarray}
&&{\mathfrak P}(H_1)=\left(
\begin{array}{ccc}
1&0&0\\
0&0&0\\
0&0&0\\
\end{array}
\right );\;\;\;\;\;
{\mathfrak P}(H_2)=\frac{1}{2}\left(
\begin{array}{ccc}
1&1&0\\
1&1&0\\
0&0&0\\
\end{array}
\right );\;\;\;\;\;
{\mathfrak P}(H_1\vee H_2)=\left(
\begin{array}{ccc}
1&0&0\\
0&1&0\\
0&0&0\\
\end{array}
\right )
\end{eqnarray}
In this example, the $H_1,H_2$ do not commute. We find that 
\begin{eqnarray}
{\mathfrak D}(H_1, H_2)=\frac{1}{2}\left(
\begin{array}{ccc}
-1&-1&0\\
-1&1&0\\
0&0&0\\
\end{array}
\right );\;\;\;\;\;
[{\mathfrak P}(H_1),{\mathfrak P}(H_2)]=\frac{1}{2}\left(
\begin{array}{ccc}
0&1&0\\
-1&0&0\\
0&0&0\\
\end{array}
\right ),
\end{eqnarray}
and we can confirm Eq.(\ref{e3}).
\end{example}
\begin{remark}
For non-commuting projectors ${\mathfrak P}(H_1),{\mathfrak P}(H_2)$,
ref\cite{HA} has considered the set of states $\ket{s}$ in $H(d)$ for which
$[{\mathfrak P}(H_1),{\mathfrak P}(H_2)]\ket{s}=0$.
They form a subspace of $H(d)$ which we denote as ${\cal H}[{\mathfrak P}(H_1),{\mathfrak P}(H_2)]$.
More generally let $A,B$ be the non-commuting observables
\begin{eqnarray}
A=\sum _i\lambda _i {\mathfrak P}_{Ai};\;\;\;\;\;\;\;B=\sum _j\mu _j {\mathfrak P}_{Bj},
\end{eqnarray}
where $\lambda _i, \mu_j$ are their eigenvalues, and ${\mathfrak P}_{Ai}, {\mathfrak P}_{Bj}$ are their eigenprojectors.
The space ${\cal H}(A,B)$ contains all states $\ket{s}$ in $H(d)$ for which
$[{\mathfrak P}_{Ai},{\mathfrak P}_{Bj}]\ket{s}=0$ for all $i,j$.
Although $A,B$ do not commute, they are `weakly compatible', i.e., compatible with respect to the states in ${\cal H}(A,B)$
(ref.\cite{HA} uses the term `relative compatibility').
It will be interesting to extend these ideas in our context, 
i.e., for ${\mathfrak D}(H_1, H_2)\ne 0$, to consider the set ${\cal R}$ of density matrices  for which
${\mathfrak d}(H_1, H_2|\rho)={\rm Tr}[{\mathfrak D}(H_1, H_2)\rho]=0$.
Then ${\mathfrak P}(H_1),{\mathfrak P}(H_2)$ can be associated to Kolmogorov probabilities,
only for density matrices in the set ${\cal R}$.
We do not study further this idea in this paper.
\end{remark}

\subsection{Boolean algebras associated with orthonormal bases}

If $B$ is a sublattice of ${\cal L}[H(d)]$ and $H_1{\cal C} H_2$ for any two elements $H_1,H_2 \in B$, then 
$B$ is a Boolean subalgebra of ${\cal L}[H(d)]$ (the negation $\neg H_1$ is the $H_1^{\bot}$)\cite{BIR}. Maximal Boolean algebras in ${\cal L}[H(d)]$
are sometimes called Boolean blocks. 
In order to construct one, we consider a basis of $d$ vectors in $H(d)$, which are orthogonal to each other.
We then consider the $e$-dimensional space spanned by $e$ of these vectors. There are 
\begin{eqnarray}
n_d(e)=\left (
\begin{array}{c}
d\\
e\\
\end{array}
\right );\;\;\;\;\;0\le e\le d
\end{eqnarray}
such spaces. Therefore the total number of spaces in the Boolean algebra is $\sum _en_d(e)=2^d$.
Acting with a unitary transformation on a Boolean algebra $B$, we get another Boolean algebra.
${\cal L}[H(d)]$ is the union of its Boolean algebras, which are `pasted'
with rules discussed in \cite{O1,O2,O3,O4}.

For any two elements $H_1,H_2$ in a Boolean algebra, ${\mathfrak D}(H_1, H_2)=0$ and ${\mathfrak d}(H_1, H_2|\rho )=0$.
In this case the corresponding quantum probabilities $p(H_1|\rho)$, $p(H_2|\rho)$, are Kolmogorov probabilities.

Boolean algebras within ${\cal L}[H(d)]$ play the role of quantum contexts in the sense of Kochen and Specker\cite{C1}.
Examples of Boolean algebras within ${\cal L}[H(4)]$ are given later.

\subsection{Projectors associated to coherent states}

We consider the one-dinesional space $H_{\alpha,\beta}$ that contains the coherent state $\ket{C;\alpha,\beta}$, defined in Eq.(\ref{coh}).
The projector to this space is 
\begin{eqnarray}
{\mathfrak P}(H_{\alpha,\beta})=\ket{C;\alpha,\beta}\bra{C;\alpha,\beta};\;\;\;\;\;
\frac{1}{d}\sum _{\alpha,\beta}{\mathfrak P}(H_{\alpha,\beta})={\bf 1}.
\end{eqnarray}
For $(\alpha,\beta) \ne (\gamma, \delta)$ we get $H_{\alpha,\beta}\wedge H_{\gamma, \delta}={\cal O}$.
Also $H_{\alpha,\beta}\vee H_{\gamma, \delta}$ is the two-dimensional space that contains the vectors
$\kappa \ket{C;\alpha,\beta}+\mu \ket{C;\gamma,\delta}$, and the corresponding projector is
\begin{eqnarray}
&&{\mathfrak P}(H_{\alpha,\beta}\vee H_{\gamma, \delta})=\ket{C;\alpha,\beta}\bra{C;\alpha,\beta}+\ket{s}\bra{s}\nonumber\\
&&\ket{s}=(1-|\lambda (\alpha, \beta ;\gamma, \delta) |^2)^{-1/2}[\ket{C;\gamma, \delta}-\lambda (\alpha, \beta ;\gamma, \delta)\ket{C;\alpha,\beta}]
\nonumber\\
&&\bra{C;\alpha,\beta}s\rangle=0
\end{eqnarray}
where $\lambda (\alpha, \beta ;\gamma, \delta)$ has been given in Eq.(\ref{la}). We note that
\begin{eqnarray}
&&{\mathfrak P}(H_{\alpha,\beta}){\mathfrak P}(H_{\gamma, \delta})=\lambda (\alpha, \beta ;\gamma, \delta)\ket{C;\alpha,\beta}\bra{C;\gamma,\delta}
\nonumber\\
&&{\mathfrak P}(H_{\gamma, \delta}){\mathfrak P}(H_{\alpha,\beta})=[\lambda (\alpha, \beta ;\gamma, \delta)]^*\ket{C;\gamma,\delta}\bra{C;\alpha,\beta}\nonumber\\
&&|\lambda(\alpha, \beta ;\gamma, \delta)|^2={\rm Tr}[{\mathfrak P}(H_{\alpha,\beta}){\mathfrak P}(H_{\gamma, \delta})]
\end{eqnarray}
Therefore
\begin{eqnarray}
&&{\mathfrak P}(H_{\alpha,\beta}\vee H_{\gamma, \delta})=\frac{1}{1-|\lambda(\alpha, \beta ;\gamma, \delta)|^2}
\left[{\mathfrak P}(H_{\alpha,\beta})+{\mathfrak P}(H_{\gamma, \delta})
-{\mathfrak P}(H_{\alpha,\beta}){\mathfrak P}(H_{\gamma, \delta})-
{\mathfrak P}(H_{\gamma, \delta}){\mathfrak P}(H_{\alpha,\beta})\right ]\nonumber\\
\end{eqnarray}
and
\begin{eqnarray}
{\mathfrak D}(H_{\alpha,\beta},H_{\gamma, \delta})&=&\frac{|\lambda(\alpha, \beta ;\gamma, \delta)|^2}{1-|\lambda(\alpha, \beta ;\gamma, \delta)|^2}\left[{\mathfrak P}(H_{\alpha,\beta})+
{\mathfrak P}(H_{\gamma, \delta})\right ]\nonumber\\
&-&\frac{1}{1-|\lambda(\alpha, \beta ;\gamma, \delta)|^2}\left [{\mathfrak P}(H_{\alpha,\beta}){\mathfrak P}(H_{\gamma, \delta})+
{\mathfrak P}(H_{\gamma, \delta}){\mathfrak P}(H_{\alpha,\beta})\right ].
\end{eqnarray}
We can now confirm Eq.(\ref{e3}) for this example, using the relations
\begin{eqnarray}
&&{\mathfrak P}(H_{\alpha,\beta}){\mathfrak P}(H_{\gamma, \delta}){\mathfrak P}(H_{\alpha,\beta})=|\lambda(\alpha, \beta ;\gamma, \delta)|^2{\mathfrak P}(H_{\alpha,\beta})\nonumber\\
&&{\mathfrak P}(H_{\gamma, \delta}){\mathfrak P}(H_{\alpha,\beta}){\mathfrak P}(H_{\gamma, \delta})=|\lambda(\alpha, \beta ;\gamma, \delta)|^2{\mathfrak P}(H_{\gamma, \delta}).
\end{eqnarray}

The above example shows that if $H_1\wedge H_2={\cal O}$ then
${\mathfrak D}(H_1, H_2)={\mathfrak P}(H_1\vee H_2)-{\mathfrak P}(H_1)-{\mathfrak P}(H_2)$, is in general non-zero.
The stronger assumption ${\mathfrak P}(H_1){\mathfrak P}(H_2)=0$ implies that $[{\mathfrak P}(H_1),{\mathfrak P}(H_2)]=0$,
and this leads to
${\mathfrak D}(H_1, H_2)=0$.

\section{The Dempster-Shafer theory: multivaluedness and lower and upper probabilities}

Dempster \cite{DS1} studied a multivalued map from a sample space $X$ to another space $\Omega$. He has shown that due to the multivaluedness of the map
Kolmogorov probabilities related to subsets of $X$, become lower and upper probabilities on subsets of $\Omega$.
We summarize very briefly, the basics of the Dempster-Shafer formalism starting with a simple example, which is similar to one in ref.\cite{Za}.

A company has $n=n_1+n_2+n_3$ employees. The age of $n_1$ employees of is known to be under $30$, $n_2$ employees are known to be over $50$, 
and the rest $n_3$ employees are known to be between $25$ and $45$. 
The Dempster multivaluedness is here the fact that our knowledge about the age of each employee, is an interval of values, rather than a single value.
We want to find the probability that a random employee
is under $35$. Let $S$ be the set of employees under $35$.
We are certain that $n_1$ employees belong to $S$, and that $n_2$ employees do not belong to $S$.
The $n_3$ employees belong to the `don't know' category (Dempster \cite{DS2} emphasized the importance of this category).
The Dempster-Shafer theory introduces the lower probability or belief $\ell$, and the upper probability or plausibility $u$, given by
\begin{eqnarray}
\ell=\frac{n_1}{n};\;\;\;\;\;u=\frac{n_1+n_2}{n}.
\end{eqnarray}
They quantify what in everyday language is called `worst case scenario' and `best case scenario'.
We express these concepts in a more formal way.

Given a set $\Omega$, let $A,B$ be elements of the powerset $2^{\Omega}$ (i.e., subsets of $\Omega$).
The lower probability or belief $\ell (A)$, is a monotone function from $2^{\Omega}$ to $[0,1]$, i.e., 
\begin{eqnarray}\label{28A}
A \subseteq B\;\;\rightarrow\;\;\ell (A)\le \ell (B),
\end{eqnarray}
The $1-\ell (\overline A)$ is in general different from $\ell (A)$, and it is the upper probability or plausibility $u(A)$
(where $\overline A=\Omega -A$).
In the above example, ${ \ell } ( \overline A)$ is the lower probability that the age of the average employee is over $35$,
and it is equal to $n_2/n$.

The difference between the upper and lower probabilities, describes the `don't know' case:
\begin{eqnarray}
u(A)-\ell (A)=1-\ell (A)-{ \ell } ( \overline A).
\end{eqnarray}
The upper probability combines the `true' and the `don't know'. 

For Kolmogorov probabilities $1-q({\overline A})=q(A)$ which means that the statement `belongs to $A$' is equivalent with the statement
`does not belong to $\overline A$'. In the Dempster-Shafer theory, due to the `don't know' cases, this is not true in general.

The properties of lower and upper probabilities and also for comparison, of Kolmogorov probabilities, are summarized in table \ref{t1}.
Kolmogorov theory can be viewed as the case where all upper probabilities are equal to the corresponding lower probabilities.

\begin{remark}
From Eq.(\ref{I1}) follows the additivity property of Kolmogorov probabilities
\begin{eqnarray}
A\cap B=\emptyset\;\;\rightarrow \;\;q(A\cup B)=q(A)+q(B).
\end{eqnarray}
This is important for integration. 
Dempster-Shafer probabilities are capacities, i.e., they obey the weaker property of Eq.(\ref{28A}).
A different integration concept, known as Choquet integrals\cite{INT1,INT2}, is applicable to them, but we do not use it in this paper.
\end{remark}

\section{Quantum probabilities as Dempster-Shafer probabilities due to non-commutativity}

We interpret a pair $p(H_1|\rho)$ and $p(H_2|\rho)$ of quantum probabilities, as upper or lower probabilities as follows:
\begin{itemize}
\item
If ${\mathfrak d}(H_1, H_2|\rho)\ge 0$, then  $p(H_1|\rho)$ and $p(H_2|\rho)$ are lower probabilities.
Also, according to Eq.(\ref{32a}), in this case ${\mathfrak d}(H_1^{\bot}, H_2^{\bot}|\rho)\le 0$, and therefore
the  $p(H_1^{\bot}|\rho)$ and $p(H_2^{\bot}|\rho)$ are upper probabilities.

\item
If ${\mathfrak d}(H_1, H_2|\rho)\le 0$, then  $p(H_1|\rho)$ and $p(H_2|\rho)$ are upper probabilities, and 
the  $p(H_1^{\bot}|\rho)$ and $p(H_2^{\bot}|\rho)$ are lower probabilities.

\item
If ${\mathfrak d}(H_1, H_2|\rho)=0$, then  $p(H_1|\rho)$ and $p(H_2|\rho)$ are Kolmogorov probabilities.
Also the $p(H_1^{\bot}|\rho)$ and $p(H_2^{\bot}|\rho)$ are Kolmogorov probabilities.

\item
The characterization lower or upper probabilities refers to a particular pair.
It may be that $p(H_1|\rho)$ is a lower probability when paired with $p(H_2|\rho)$, and upper probability when paired with $p(H_3|\rho)$.
\end{itemize}

It is clear that if $H_1,H_2$ are elements of a Boolean algebra, then  $p(H_1|\rho)$, $p(H_2|\rho)$,
are Kolmogorov probabilities, and a relation analogous to Eq.(\ref{I1}) holds. But for general elements $H_1,H_2$ in ${\cal L}[H(d)]$
the $p(H_1|\rho)$, $p(H_2|\rho)$, are Dempster-Shafer probabilities, and a relation analogous to Eq.(\ref{I1}) does not hold.
Non-zero value of the commutator $[{\mathfrak P}(H_1),{\mathfrak P}(H_2)]$, implies non-zero value of ${\mathfrak D}(H_1, H_2)$,
and this leads to the adoption of Dempster-Shafer probabilities, which do not obey Eq.(\ref{I1}).

We summarize our motivation for using the Dempster-Shafer theory in a quantum context:
\begin{itemize}
\item[(1)]
Kolmogorov probabilities obey Eq.(\ref{I1}), while Dempster-Shafer probabilities might violate it (see table \ref{t1}).
Quantum probabilities violate the analogue of Eq.(\ref{I1}), and this is  directly linked to non-commutativity. Eq.(\ref{e3}) shows that 
the commutator $[{\mathfrak P}(H_1), {\mathfrak P}(H_2)]$
is intimately related to ${\mathfrak D}(H_1, H_2)$, which quantifies deviations from Kolmogorov probabilities.
Therefore in quantum mechanics we need a more general (than Kolmogorov) probability theory, and we propose the Dempster-Shafer theory.

\item[(2)]
Kolmogorov probabilities are intimately connected to Boolean algebras.
Within the ${\cal L}[H(d)]$ there are Boolean algebras, and in these `islands' quantum probabilities behave like Kolmogorov probabilities.
But in the full lattice, we need a more general probability theory, and we propose the 
Dempster-Shafer theory. 

\item[(3)]
The passage from classical to quantum mechanics can be viewed as a type of Dempster multivaluedness.
The product of two classical quantities, becomes a product of two operators, which can be ordered in many ways (as
${\mathfrak P}(H_1){\mathfrak P}(H_2)$ or as ${\mathfrak P}(H_2){\mathfrak P}(H_1)$ or in many other intermediate ways).
This can be interpreted as a type of Dempster multivaluedness, the `spread' of which is quantified with the commutator $[{\mathfrak P}(H_1), {\mathfrak P}(H_2)]$.
Eq.(\ref{e3}) shows that this `spread' is intimately related to ${\mathfrak D}(H_1, H_2)$, which quantifies deviations from Kolmogorov probabilities.
Dempster's `don't know' becomes here `don't know which ordering rule to use'.
So the motivation for introducing an interval of probabilities (from the lower to the upper one), is because there are many products of the operators
${\mathfrak P}(H_1), {\mathfrak P}(H_2)$.
We point out here the analogy with the $Q$-function, Wigner function and $P$-function,  which are part of a continuum of quantities (not probabilities)
related to the ordering of operators (e.g., \cite{A1,A2}).
The lower and upper probabilities is another language for these problems, which might provide a deeper insight to phase space methods in quantum mechanics.

\item[(4)]
The Dempster-Shafer theory uses non-additive probabilities (i.e., it violates Eq.(\ref{I1})).
Non-additive probabilities are used in areas like Game theory or Operations Research to describe coalitions (e.g., the merger of two companies).
In everyday language this non-additivity is described with the expression
`the whole is greater than the sum of its parts'. 
Non-additive probabilities might be better in describing problems like contextuality, than additive (Kolmogorov) probabilities.
A measurement $M$ preformed in conjuction with the measurements $A_1,A_2,..$,
might give a different result from the measurement $M$ preformed in conjuction with the measurements $B_1,B_2,..$\cite{C3}.
Here the $M,A_1,A_2,..$ commute with each other and the $M,B_1,B_2,..$ also commute with each other,
but the $A_1,A_2,..$ might not commute with the $B_1,B_2,..$.
So the measurement $M$ behaves in a different way, when it is in a `coalition' with the $A_1,A_2,..$, than when it is in a coalition with the $B_1,B_2,..$.
In the next section we make a small step in this direction, and we show that Dempster-Shafer probabilities violate Bell inequalities.
The fact that experiment also violates these inequalities, supports the adoption of the Dempster-Shafer theory.
\end{itemize}

\section{Dempster-Shafer probabilities violate Bell inequalities}

The purpose of this section is to show that the proof of Bell inequalities relies on the properties of Kolmogorov
probabilities, and it is not valid
for Dempster-Shafer probabilities. 
The violation of these inequalities in experiments, supports the interpretation of quantum probabilities as Dempster-Shafer probabilities.

\subsection{Boolean algebras within ${\cal L}[H(4)]$}

We consider the subspaces of $H(4)$ shown in table \ref{t2}.
Also ${\cal I}=H(4)$ and ${\cal O}$ is the zero-dimensional subspace that contains only the zero vector.
The 
\begin{eqnarray}
B_A=\{{\cal I},{\cal O}, H_{iA}, H_{iA}^{\bot}\;|\;i=1,...,7\}
\end{eqnarray} 
is a sublattice of ${\cal L}[H(4)]$. 
It is easily seen that $B_A$ is closed under the various logical operations. 
Also any two of its $16$ elements commute ($H_{iA}{\cal C} H_{jA}$). 
The sublattice $B_A$ is a Boolean algebra within ${\cal L}[H(4)]$. Negation is defined as $\neg H_{iA}=H_{iA}^{\bot}$.
Projectors to the spaces in $B_A$, in terms of the projectors in Eq.(\ref{3}), are given in table \ref{t2}.

With any $SU(4)$ transformation, we get another Boolean algebra. 
For later use we consider the transformation 
\begin{eqnarray}
U_B={\bf 1}_2\otimes U(a,b)
\end{eqnarray}
and we get the Boolean algebra $B_B$ shown in table \ref{t2}.
We use the notation $H_{iB}$ for the Hilbert space that contains the states $U_B\ket{s}$ where $\ket{s}$ is a state in $H_{iA}$.
In particular we note that $H_{5A}=H_{5B}$ and $H_{5A}^{\bot}=H_{5B}^{\bot}$.
The Hilbert spaces in
\begin{eqnarray}
B_A\cap B_B=\{{\cal I},{\cal O},H_{5A}, H_{5A}^{\bot}\}
\end{eqnarray}  
commute with all spaces in both Boolean algebras $B_A,B_B$.
But the spaces in $B_A-(B_A\cap B_B)$, in general do not commute with the spaces in $B_B-(B_A\cap B_B)$.
Similar statement can be made for the projectors to these spaces.

For later use we also consider the transformations 
\begin{eqnarray}
U_C=U(a,b)\otimes {\bf 1}_2;\;\;\;\;\;
U_D=U(a,b)\otimes U(a,b)
\end{eqnarray}
and we get the Boolean algebra $B_C, B_D$ correspondingly.
We use notation analogous to the above, for the Hilbert spaces in these Boolean algebras.

\subsection{CHSH inequalities for a system of two spin $1/2$ particles}

A logical derivation of Bell-like inequalities for the case of Boolean variables, has been presented in \cite{C12}
and we have generalized it for Heyting algebras in \cite{vou}.
We present briefly the derivation for Boolean variables, in order to emphasize the crucial role of Boole's inequality,
which is valid for Kolmogorov probabilities, but might not be valid for Dempster-Shafer probabilities.
\begin{proposition}\label{723}
Let  $H_1,...,H_n$ be elements of ${\cal L}[H(d)]$ such that $H_1\wedge ...\wedge H_n={\cal O}$.
If $p(H_i|\rho )$ have the properties of Kolmogorov probabilities, then
\begin{eqnarray}\label{77}
\sum _{i=1}^n p(H_i|\rho )\le n -1
\end{eqnarray}
\end{proposition}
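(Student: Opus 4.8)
The plan is to reduce Eq.(\ref{77}) to \emph{Boole's inequality} (subadditivity) applied to the orthocomplements $H_i^{\bot}$, since it is precisely this inequality that encodes the Kolmogorov hypothesis and that can fail in the Dempster-Shafer setting. First I would pass to complements: using $p(H_i^{\bot}|\rho)=1-p(H_i|\rho)$, the target bound $\sum_{i=1}^n p(H_i|\rho)\le n-1$ is equivalent to $\sum_{i=1}^n p(H_i^{\bot}|\rho)\ge 1$. To bring in the hypothesis, recall that ${\cal L}[H(d)]$ is orthocomplemented, so De Morgan's law gives $H_1^{\bot}\vee\dots\vee H_n^{\bot}=(H_1\wedge\dots\wedge H_n)^{\bot}={\cal O}^{\bot}={\cal I}$; the orthocomplements therefore join to the whole space, and hence $p(H_1^{\bot}\vee\dots\vee H_n^{\bot}|\rho)=p({\cal I}|\rho)={\rm Tr}[\rho]=1$.

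The remaining and crucial step is Boole's inequality
\begin{eqnarray}
p(H_1^{\bot}\vee\dots\vee H_n^{\bot}|\rho)\le \sum_{i=1}^n p(H_i^{\bot}|\rho),
\end{eqnarray}
which I would establish by induction on $n$. The base case $n=2$ follows from the generalized additivity relation Eq.(\ref{I1}) with vanishing correction, $p(A\vee B|\rho)=p(A|\rho)+p(B|\rho)-p(A\wedge B|\rho)$, combined with the non-negativity $p(A\wedge B|\rho)\ge 0$. The inductive step applies the same two-term relation to the pair $(H_1^{\bot}\vee\dots\vee H_{n-1}^{\bot})$ and $H_n^{\bot}$ and invokes the induction hypothesis. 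Combining this with the identity of the previous paragraph yields $1\le\sum_{i=1}^n p(H_i^{\bot}|\rho)$, which is exactly the inequality equivalent to Eq.(\ref{77}).

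The main obstacle is conceptual rather than computational: every use of the two-term additivity relation demands that the correction term vanish, i.e.\ ${\mathfrak d}(\cdot,\cdot|\rho)=0$ (equivalently, commuting projectors, a Boolean context). By Proposition \ref{pro} this fails for non-commuting subspaces, where ${\mathfrak D}(H_1,H_2)\ne 0$, so Boole's inequality — and therefore Eq.(\ref{77}) — can break down. This is precisely the loophole the CHSH discussion will exploit: the derivation is legitimate only under the stated assumption that the $p(H_i|\rho)$ behave as Kolmogorov probabilities, and it is invalid for genuine Dempster-Shafer probabilities.
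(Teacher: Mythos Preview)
Your proof is correct and follows essentially the same route as the paper: pass to orthocomplements via De Morgan so that their join is ${\cal I}$, observe that the corresponding probability equals $1$, apply Boole's inequality to bound this by $\sum_i p(H_i^{\bot}|\rho)$, and then use $p(H_i^{\bot}|\rho)=1-p(H_i|\rho)$ to obtain Eq.(\ref{77}). The only difference is cosmetic: the paper simply invokes Boole's inequality from the Kolmogorov column of table~\ref{t1}, whereas you derive it by induction from the two-term additivity relation Eq.(\ref{I1}); your concluding remark on where the argument breaks down for Dempster--Shafer probabilities likewise mirrors the paper's own remark following the proof.
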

\begin{proof}
We start with the relation
\begin{eqnarray}\label{502}
 H_1^{\bot}\vee...\vee H_n^{\bot}=(H_1\wedge...\wedge H_n)^{\bot}={\cal I}.
\end{eqnarray}
Therefore
\begin{eqnarray}\label{B1}
p(H_1^{\bot}\vee...\vee H_n^{\bot}|\rho)=1.
\end{eqnarray}
For Kolmogorov probabilities, we can use Boole's inequality (see table \ref{t1}) to get 
\begin{eqnarray}\label{B2}
p(H_1^{\bot}|\rho)+...+p(H_n^{\bot}|\rho)\ge 1.
\end{eqnarray}
We then use the relation
\begin{eqnarray}
p(H_i^{\bot}|\rho)=1-p( H_i|\rho),
\end{eqnarray}
and we get Eq.(\ref{77}).
\end{proof}
\begin{remark}
The step from Eq.(\ref{B1}) to Eq.(\ref{B2}) might not be valid for Dempster-Shafer probabilities, and the inequality in Eq.(\ref{77}) might be violated. Below we give an example of this.
\end{remark}

We consider a system of two spin $1/2$ particles, described with the Hilbert space $H(4)=H(2)\otimes H(2)$.
The system is in the state
\begin{eqnarray}\label{state}
\ket{s}=\frac{1}{\sqrt{2}}\left [\ket{\frac{1}{2},\frac{1}{2}}\otimes \ket{\frac{1}{2},\frac{1}{2}}+\ket{\frac{1}{2},-\frac{1}{2}}
\otimes \ket{\frac{1}{2},-\frac{1}{2}}\right ]
=\frac{1}{\sqrt{2}}(1,0,0,1)^T
\end{eqnarray}
We consider the measurements described with the operators
\begin{eqnarray}\label{MA}
A&=&S_x\otimes S_x=\frac{1}{4}[\Pi(x, 1)\otimes \Pi(x ,1)+\Pi(x ,0)\otimes \Pi(x, 0)]\nonumber\\
&-&\frac{1}{4}[\Pi(x, 1)\otimes \Pi(x, 0)+\Pi(x ,0)\otimes \Pi(x, 1)]
=\frac{1}{4}[{\mathfrak P}(H_{1A})+{\mathfrak P}(H_{4A})]-\frac{1}{4}[{\mathfrak P}(H_{2A})+{\mathfrak P}(H_{3A})]
\end{eqnarray}
\begin{eqnarray}\label{MB}
B&=&S_x\otimes S_{a,b}=\frac{1}{4}[\Pi(x, 1)\otimes \Pi(a,b;1)+\Pi(x ,0)\otimes \Pi(a,b; 0)]\nonumber\\
&-&\frac{1}{4}[\Pi(x ,1)\otimes \Pi(a,b;0)+\Pi(x, 0)\otimes \Pi(a,b;1)]
=\frac{1}{4}[{\mathfrak P}(H_{1B})+{\mathfrak P}(H_{4B})]-\frac{1}{4}[{\mathfrak P}(H_{2B})+{\mathfrak P}(H_{3B})]
\end{eqnarray}
\begin{eqnarray}\label{MC}
C&=&S_{a,b}\otimes S_x=\frac{1}{4}[\Pi(a,b;1)\otimes \Pi(x ,1)+\Pi(a,b; 0)\otimes \Pi(x, 0)]\nonumber\\
&-&\frac{1}{4}[\Pi(a,b; 1)\otimes \Pi(x, 0)+\Pi(a,b;0)\otimes \Pi(x ,1)]
=\frac{1}{4}[{\mathfrak P}(H_{1C})+{\mathfrak P}(H_{4C})]-\frac{1}{4}[{\mathfrak P}(H_{2C})+{\mathfrak P}(H_{3C})]
\end{eqnarray}
\begin{eqnarray}\label{MD}
D&=&S_{a,b}\otimes S_{a,b}=\frac{1}{4}[\Pi(a,b;1)\otimes \Pi(a,b;1)+\Pi(a,b; 0)\otimes \Pi(a,b;0)]\nonumber\\
&-&\frac{1}{4}[\Pi (a,b; 1)\otimes \Pi(a,b;0)+\Pi(a,b;0)\otimes \Pi(a,b;1)]\nonumber\\
&=&\frac{1}{4}[{\mathfrak P}(H_{1D})+{\mathfrak P}(H_{4D})]-\frac{1}{4}[{\mathfrak P}(H_{2D})+{\mathfrak P}(H_{3D})]
\end{eqnarray}
Each of these measurements gives one of the outcomes $(1,1)$, $(0,1)$, $(1,0)$, $(0,0)$. 
We use the notation $p(A;1,1)$ for the probability that measurement $A$ will give the outcome $(1,1)$.
From table \ref{t1} it is seen that this is also the probability $p(H_{1A}|\ket{s})$:
\begin{eqnarray}
p(A;1,1)=\bra {s}\Pi(x, 1)\otimes \Pi(x ,1)\ket {s}=\bra {s}{\mathfrak P}(H_{1A})\ket {s}=p(H_{1A}|\ket{s})
\end{eqnarray}
More generally
\begin{eqnarray}
&&p(i;1,1)=p(H_{1i}|\ket{s});\;\;\;
p(i;0,1)=p(H_{3i}|\ket{s});\;\;\;
p(i;1,0)=p(H_{2i}|\ket{s});\;\;\;
p(i;0,0)=p(H_{4i}|\ket{s})\nonumber\\
&&i=A,B,C,D.
\end{eqnarray}
The values of the probabilities are shown in table \ref{t4}, where
\begin{eqnarray}\label{proba}
\kappa=\frac{1}{2}(a_R^2+b_I^2);\;\;\;\;\lambda=\frac{1}{8}\left \{|a+b|^4+|a-b|^4+(a^2-b^2)^2+[(a^*)^2-(b^*)^2]^2\right \}.
\end{eqnarray}
Here the indices $R$,$I$ indicate the real and imaginary part, correspondingly.

We next prove that
\begin{eqnarray}\label{756}
[{\mathfrak P}(H_{1A})+{\mathfrak P}(H_{4A})][{\mathfrak P}(H_{1B})+{\mathfrak P}(H_{4B})][{\mathfrak P}(H_{1C})+{\mathfrak P}(H_{4C})]
[{\mathfrak P}(H_{2D})+{\mathfrak P}(H_{3D})]=0.
\end{eqnarray}
Each of the $16$ terms in this product is easily seen to be equal to $0$. For example
\begin{eqnarray}
{\mathfrak P}(H_{1A}){\mathfrak P}(H_{1B}){\mathfrak P}(H_{1C}){\mathfrak P}(H_{2D})&=&
[\Pi(x ,1)\otimes \Pi(x; 1)][\Pi(x ,1)\otimes \Pi(a,b; 1)]\nonumber\\&\times &[\Pi(a,b; 1)\otimes \Pi(x ,1)]
[\Pi(a,b; 1)\otimes \Pi(a,b; 0)]=0
\end{eqnarray}
We then point out that ${\mathfrak P}(H_{1A}){\mathfrak P}(H_{4A})=0$ and therefore ${\mathfrak P}(H_{1A})+{\mathfrak P}(H_{4A})={\mathfrak P}(H_{1A} \vee H_{4A})$.
Similar comment can be made for the other factors in Eq(\ref{756}), and therefore it can be written as
\begin{eqnarray}
{\mathfrak P}(H_{1A} \vee H_{4A}){\mathfrak P}(H_{1B} \vee H_{4B}){\mathfrak P}(H_{1C} \vee H_{4C})
{\mathfrak P}(H_{2D} \vee H_{3D})=0.
\end{eqnarray}
From this follows that
\begin{eqnarray}\label{678}
[H_{1A} \vee H_{4A}]\wedge [H_{1B} \vee H_{4B}]\wedge [H_{1C} \vee H_{4C}]\wedge [H_{2D} \vee H_{3D}]={\cal O}.
\end{eqnarray}
Then proposition \ref{723}, gives the following CHSH inequality:
\begin{eqnarray}\label{987}
p(H_{1A} \vee H_{4A}|\ket{s})+p(H_{1B} \vee H_{4B}|\ket{s})+p(H_{1C} \vee H_{4C}|\ket{s})+p(H_{2D} \vee H_{3D}|\ket{s})\le 3.
\end{eqnarray}
The fact that  ${\mathfrak P}(H_{1A})+{\mathfrak P}(H_{4A})={\mathfrak P}(H_{1A} \vee H_{4A})$ implies that
$p(H_{1A} \vee H_{4A}|\ket{s})=p(H_{1A}|\ket{s})+p(H_{4A}|\ket{s})$, and similarly for the other terms in Eq.(\ref{987}), 
which can be written as
\begin{eqnarray}
&&p(H_{1A}|\ket{s})+p(H_{4A}|\ket{s})+p(H_{1B}|\ket{s})+p(H_{4B}|\ket{s})+p(H_{1C}|\ket{s})\nonumber\\&&+p(H_{4C}|\ket{s})+p(H_{2D}|\ket{s})+p(H_{3D}|\ket{s})\le 3.
\end{eqnarray}
We insert the values from table \ref{t4} and we get
\begin{eqnarray}\label{951}
1+4\kappa+(1-2\lambda) \le 3
\end{eqnarray}
As an example, we consider the case with $a=\exp (i\theta)$ and $b=0$. Then $\kappa =\frac{1}{2}(\cos \theta)^2$ and $\lambda=\frac{1}{8}(2+2 \cos 4\theta)$.
Therefore the inequality of Eq.(\ref{951}) becomes
\begin{eqnarray}
2(\cos \theta)^2-\frac{1}{2}(1+\cos 4\theta)\le 1.
\end{eqnarray}
For $\theta=\pi/8$ this inequality is violated.

\begin{remark}
We have emphasized that the derivation of the inequality in Eq.(\ref{77}), 
which is used in Eq.(\ref{987}), relies on the properties of Kolmogorov probabilities,
and in particular on Boole's inequality.
We now show explicitly (i) that the probabilities in Eq.(\ref{987}) are not Kolmogorov probabilities,
and (ii) that Boole's inequality is violated. 
\begin{itemize}
\item[(i)]
We consider the $p(H_{1A} \vee H_{4A}|\ket{s})$ and $p(H_{1B} \vee H_{4B}|\ket{s})$.
We use the relations
\begin{eqnarray}
H_{1A} \vee H_{4A}=H_{7A};\;\;\;\;\;H_{1B} \vee H_{4B}=H_{7B}
\end{eqnarray}
and we find that the commutator of the corresponding projectors, is
\begin{eqnarray}
&&[{\mathfrak P} (H_{1A} \vee H_{4A}),{\mathfrak P} (H_{1B} \vee H_{4B})]=[{\mathfrak P} (H_{7A}),{\mathfrak P} (H_{7B})]\nonumber\\&&=
\Pi(x, 1)\otimes [\Pi(x, 1), \Pi(a,b; 1)]+\Pi(x, 0)\otimes [\Pi(x, 0),\Pi(a,b; 0)].
\end{eqnarray}
According to Eq.(\ref{e3}) in this case ${\mathfrak D}(H_{1A} \vee H_{4A}, H_{1B} \vee H_{4B})\ne 0$ and the corresponding probabilities 
are not Kolmogorov probabilities. 

\item[(ii)]
From Eq.(\ref{678}) it follows that
\begin{eqnarray}
[H_{1A} \vee H_{4A}]^{\bot}\vee [H_{1B} \vee H_{4B}]^{\bot}\vee[H_{1C} \vee H_{4C}]^{\bot}\vee[H_{2D} \vee H_{3D}]^{\bot}={\cal I}.
\end{eqnarray}
Therefore the corresponding probability is 
\begin{eqnarray}\label {B10}
p\left ([H_{1A} \vee H_{4A}]^{\bot}\vee [H_{1B} \vee H_{4B}]^{\bot}\vee[H_{1C} \vee H_{4C}]^{\bot}\vee[H_{2D} \vee H_{3D}]^{\bot}\right )=1.
\end{eqnarray}
On the other hand from table \ref{t4}, it is seen that
\begin{eqnarray}
&&p\left ([H_{1A} \vee H_{4A}]^{\bot}\right )=1-p[H_{1A} \vee H_{4A}]=0\nonumber\\
&&p\left ([H_{1B} \vee H_{4B}]^{\bot}\right )=1-p[H_{1B} \vee H_{4B}]=1-2\kappa\nonumber\\
&&p\left ([H_{1C} \vee H_{4C}]^{\bot}\right )=1-p[H_{1C} \vee H_{4C}]=1-2\kappa\nonumber\\
&&p\left ([H_{2D} \vee H_{3D}]^{\bot}\right )=1-p[H_{2D} \vee H_{3D}]=1-(1-2\lambda)
\end{eqnarray}
and therefore
\begin{eqnarray}\label{B20}
p\left ([H_{1A} \vee H_{4A}]^{\bot}\right )+
p\left ([H_{1B} \vee H_{4B}]^{\bot}\right )+
p\left ([H_{1C} \vee H_{4C}]^{\bot}\right )+
p\left ([H_{2D} \vee H_{3D}]^{\bot}\right )=2-4\kappa +2\lambda.
\end{eqnarray}
We have seen earlier an example where $2-4\kappa +2\lambda$ is smaller than $1$ and therefore
the sum of probabilities in Eq.(\ref{B20}), is smaller than the probability in Eq.(\ref{B10}),
which shows explicitly that Boole's inequality is violated. 
\end{itemize} 
\end{remark}

\section{Discussion}

An important property of Kolmogorov probabilities is that $\delta (A,B)=0$ (Eq.(\ref{I1})).
We have introduced the operator ${\mathfrak D}(H_1,H_2)$ of Eq.(\ref{32}), which is analogous to
$\delta (A,B)$ and we have shown that it is related to the commutator $[{\mathfrak P} (H_1),{\mathfrak P} (H_2)]$ as in Eq.(\ref{e3}).
This shows a direct link between commutativity and the Kolmogorov property $\delta (A,B)=0$.
If $H_1,H_2$ belong to the same Boolean subalgebra of the orthomodular lattice ${\cal L}[H(d)]$, 
then ${\mathfrak D}(H_1,H_2)=[{\mathfrak P} (H_1),{\mathfrak P} (H_2)]=0$.
In this case the corresponding probabilities $p(H_1|\rho), p(H_2|\rho)$, obey
Eq.(\ref{I1}), and they are Kolmogorov probabilities.
But in general, the ${\mathfrak D}(H_1,H_2)$ and $[{\mathfrak P} (H_1),{\mathfrak P} (H_2)]$ are non-zero,
and then the $p(H_1|\rho), p(H_2|\rho)$, do not obey 
Eq.(\ref{I1}), and they are not Kolmogorov probabilities.

The Dempster-Shafer theory is designed for `real world' data which have multivaluedness and contradictions.
Dempster-Shafer probabilities have the properties shown in table \ref{t1}, which fit very well with the requirements of quantum probabilities.
In particular they violate Eq.(\ref{I1}). The difference between upper and lower probabilities, is Dempster's `don't know' cases,
which here is related to ambiguity in the ordering of quantum mechanical operators. In the semiclassical limit, the commutators go to zero,
and the ${\mathfrak D}(H_1,H_2)$ goes to zero, and lower probabilities become equal to upper probabilities, i.e., they become Kolmogorov probabilities. 

As an application we have considered CHSH inequalities and we have stressed that their proof relies on the properties of Kolmogorov probabilities, 
and it is not valid for Dempster-Shafer probabilities. Their violation in experiments, supports the use of the Dempster-Shafer theory for quantum probabilities.

In this work we have considered systems with finite Hilbert space, and then ${\cal L}[H(d)]$ is a modular lattice.
This was used, for example, in the first part of proposition \ref{pro}.  
However, our main theme of using Dempster-Shafer probabilities as quantum probabilities, 
could also be applied to systems with infinite-dimensional Hilbert spaces.
In this case the lattice is orthomodular but not modular, not every subspace is closed, etc.
So there are extra mathematical questions, which need to be considered.

\newpage
\begin{table}
\caption{Properties of the lower and upper probabilities in the Dempster-Shafer theory, and also of the Kolmogorov probabilities.
$A,B$ are subsets of $\Omega$.}
\centering
\begin{tabular}{|c|c|}\hline
{\rm \bf lower probabilities} $\ell (A)$ &{\rm \bf upper probabilities} $u(A)=1-\ell (\overline A)$\\\hline
$A \subseteq B\;\;\rightarrow\;\;\ell (A)\le \ell (B)$&$A \subseteq B\;\;\rightarrow\;\;u (A)\le u (B)$\\\hline
$\ell (\emptyset)=0;\;\;\;\;\ell (\Omega)=1$&$u (\emptyset)=0;\;\;\;\;u (\Omega)=1$\\\hline
$\ell (A\cup B)-\ell (A)-\ell (B)+\ell (A\cap B)\ge 0$&$u (A\cup B)-u (A)-u (B)+u (A\cap B)\le 0$\\\hline
$\ell (\overline A)+\ell (A)\le 1$&$u (\overline A)+u (A)\ge 1$\\\hline
$\ell (A)+\ell (B)-\ell (A\cup B)$ &$u(A)+u(B)-u(A\cup B)\ge 0$\\
{\rm might be negative}&{\rm Boole's inequality}\\\hline\hline
{\rm \bf Kolmogorov probabilities} $q(A)$&\\\hline
$A \subseteq B\;\;\rightarrow\;\;q (A)\le q (B)$&\\\hline
$q (\emptyset)=0;\;\;\;\;q (\Omega)=1$&\\\hline
$q (A\cup B)-q (A)-q(B)+q(A\cap B)= 0$&\\\hline
$q (\overline A)+q (A)= 1$&\\\hline
$q (A)+q (B)-q (A\cup B)\ge 0$&\\
{\rm Boole's inequality}&\\\hline
\end{tabular}\label{t1}

\vspace{0.7cm}

\caption{The Hilbert spaces in the Boolean algebra $B_A$ within ${\cal L}[H(4)]$, and the corresponding projectors.
The ${\cal I}=H(4)$ and ${\cal O}$ also belong to $B_A$. The general  vector that belongs to each of these spaces is shown.}
\centering
\begin{tabular}{|c|||c|}\hline
$H_{1A}=\left \{a (1,1,1,1)^T\right \}$&${\mathfrak P}(H_{1A})=\Pi(x ,1)\otimes \Pi(x, 1)$\\\hline
$H_{1A}^{\bot}=\left \{(a,b,c, -a-b-c)^T\right \}$&${\mathfrak P}(H_{1A}^{\bot})={\bf 1}_4-\Pi(x, 1)\otimes \Pi(x ,1)$\\\hline
$H_{2A}=\left \{a (1,-1,1,-1)^T\right \}$&${\mathfrak P}(H_{2A})=\Pi(x, 1)\otimes \Pi(x, 0)$\\\hline
$H_{2A}^{\bot}=\left \{(a,b,c,a-b+c )^T\right \}$&${\mathfrak P}(H_{2A}^{\bot})={\bf 1}_4-\Pi(x, 1)\otimes \Pi(x, 0)$\\\hline
$H_{3A}=\left \{a(1,1,-1,-1)^T\right \}$&${\mathfrak P}(H_{3A})=\Pi(x, 0)\otimes \Pi(x, 1)$\\\hline
$H_{3A}^{\bot}=\left \{(a,b,c,a+b-c)^T\right \}$&${\mathfrak P}(H_{3A}^{\bot})={\bf 1}_4-\Pi(x, 0)\otimes \Pi(x, 1)$\\\hline
$H_{4A}=\left \{a (1,-1,-1,1)^T\right \}$&${\mathfrak P}(H_{4A})=\Pi(x, 0)\otimes \Pi(x, 0)$\\\hline
$H_{4A}^{\bot}=\left \{(a,b,c,-a+b+c)^T\right \}$&${\mathfrak P}(H_{4A}^{\bot})={\bf 1}_4-\Pi(x, 0)\otimes \Pi (x, 0)$\\\hline
$H_{5A}=\left \{(a,b,a,b)^T\right \}$&${\mathfrak P}(H_{5A})=\Pi(x, 1)\otimes {\bf 1}_2$\\\hline
$H_{5A}^{\bot}=\left \{(a,b,-a,-b)^T\right \}$&${\mathfrak P}(H_{5A}^{\bot})={\bf 1}_4-\Pi(x, 1)\otimes {\bf 1}_2$\\\hline
$H_{6A}=\left \{(a,a,b,b)^T\right \}$&${\mathfrak P}(H_{6A})={\bf 1}_2\otimes \Pi(x, 1)$\\\hline
$H_{6A}^{\bot}=\left \{(a,-a,b,-b)^T\right \}$&${\mathfrak P}(H_{6A}^{\bot})={\bf 1}_4-{\bf 1}_2\otimes \Pi(x, 1)$\\\hline
$H_{7A}=\left \{(a,b,b,a)^T\right \}$&${\mathfrak P}(H_{7A})=\Pi(x, 1)\otimes \Pi(x, 1)+\Pi(x, 0)\otimes \Pi(x, 0)$\\\hline
$H_{7A}^{\bot}=\left \{(a,b,-b,-a)^T\right \}$&${\mathfrak P}(H_{7A}^{\bot})={\bf 1}_4-\Pi(x, 1)\otimes \Pi(x, 1)-
\Pi(x, 0)\otimes \Pi(x, 0)$\\\hline
\end{tabular}\label{t2}
\vspace{0.7cm}
\end{table}
\newpage
\begin{table}

\caption{The Hilbert spaces in the Boolean algebra $B_B$ within ${\cal L}[H(4)]$, and the corresponding projectors.
The ${\cal I}=H(4)$ and ${\cal O}$ also belong to $B_B$. }
\centering
\begin{tabular}{|c|||c|}\hline
$H_{1B}$&${\mathfrak P}(H_{1B})=\Pi(x ,1)\otimes \Pi(a,b; 1)$\\\hline
$H_{1B}^{\bot}$&${\mathfrak P}(H_{1B}^{\bot})={\bf 1}_4-\Pi(x, 1)\otimes \Pi(a,b; 1)$\\\hline
$H_{2B}$&${\mathfrak P}(H_{2B})=\Pi(x, 1)\otimes \Pi(a,b; 0)$\\\hline
$H_{2B}^{\bot}$&${\mathfrak P}(H_{2B}^{\bot})={\bf 1}_4-\Pi(x, 1)\otimes \Pi(a,b; 0)$\\\hline
$H_{3B}$&${\mathfrak P}(H_{3B})=\Pi(x, 0)\otimes \Pi(a,b; 1)$\\\hline
$H_{3B}^{\bot}$&${\mathfrak P}(H_{3B}^{\bot})={\bf 1}_4-\Pi(x, 0)\otimes \Pi(a,b; 1)$\\\hline
$H_{4B}$&${\mathfrak P}(H_{4B})=\Pi(x, 0)\otimes \Pi(a,b; 0)$\\\hline
$H_{4B}^{\bot}$&${\mathfrak P}(H_{4B}^{\bot})={\bf 1}_4-\Pi(x, 0)\otimes \Pi (a,b; 0)$\\\hline
$H_{5B}$&${\mathfrak P}(H_{5B})=\Pi(x, 1)\otimes {\bf 1}_2$\\\hline
$H_{5B}^{\bot}$&${\mathfrak P}(H_{5B}^{\bot})={\bf 1}_4-\Pi(x, 1)\otimes {\bf 1}_2$\\\hline
$H_{6B}$&${\mathfrak P}(H_{6B})={\bf 1}_2\otimes \Pi(a,b; 1)$\\\hline
$H_{6B}^{\bot}$&${\mathfrak P}(H_{6B}^{\bot})={\bf 1}_4-{\bf 1}_2\otimes \Pi(a,b; 1)$\\\hline
$H_{7B}$&${\mathfrak P}(H_{7B})=\Pi(x, 1)\otimes \Pi(a,b; 1)+\Pi(x, 0)\otimes \Pi(a,b; 0)$\\\hline
$H_{7B}^{\bot}$&${\mathfrak P}(H_{7B}^{\bot})={\bf 1}_4-\Pi(x, 1)\otimes \Pi(a,b; 1)-\Pi(x, 0)\otimes \Pi(a,b; 0)$\\\hline
\end{tabular}\label{t3}

\caption{Probabilities for the outcome of the measurements $A,B,C,D$, on a system of two spin $1/2$ particles, 
in the state $\ket{s}$ of Eq.(\ref{state})
The values of $\kappa$, $\lambda$ are given in Eq.(\ref{proba}).}
\centering
\begin{tabular}{|c||c|c|c|c|}\hline
& $p(i;1,1)=p(H_{1i}|\ket{s})$& $p(i;0,1)=p(H_{3i}|\ket{s})$&$p(i;1,0)=p(H_{2i}|\ket{s})$&$p(i;0,0)=p(H_{4i}|\ket{s})$\\ \hline\hline
$i=A$&$0.5$& $0$& $0$& $0.5$ \\ \hline
$i=B$&$\kappa$& $0.5-\kappa$& $0.5-\kappa$ & $\kappa$\\ \hline
$i=C$&$\kappa$& $0.5-\kappa$& $0.5-\kappa$ & $\kappa$\\ \hline
$i=D$&$\lambda$& $0.5-\lambda$& $0.5-\lambda$ & $\lambda$\\ \hline
\end{tabular}\label{t4}
\vspace{0.7cm}
\end{table}

\end{document}